\begin{document}
%
\title{Performance of General STCs over Spatially Correlated MIMO Single-keyhole Channels}
%
%
%

\author{Chen~He, Xun~Chen,
        and
        Z.~Jane~Wang,~\IEEEmembership{Senior Member,~IEEE}

        \thanks{The authors are with \IEEEauthorblockA{Department of Electrical and Computer Engineering, University of British Columbia, Vancouver, BC, V6T 1Z4 Canada. Emails: \{chenh, xunchen, zjanew\}@ece.ubc.ca.} }
}

%
%

\markboth{}%
{Shell \MakeLowercase{\textit{et al.}}: Bare Demo of IEEEtran.cls for Journals}
%



\maketitle

\begin{abstract}
For MIMO Rayleigh channels, it has been shown that transmitter correlations always degrade the performance of general space-time codes (STCs) in high SNR regimes. In this correspondence, however, we show that when MIMO channels experience single-keyhole conditions, the effect of spatial correlations between transmission antennas is more sophisticated for general STCs: when $M>N$ (i.e., the number of transmission antennas is greater than the number of receiving antennas), depending on how the correlation matrix $\mathbf{P}$ beamforms the code word difference matrix $\mathbf{\Delta}$, the PEP performance of general STCs can be either degraded or improved in high SNR regimes. We provide a new measure, which is based on the eigenvalues of $\mathbf{\Delta}$ and the numbers of transmission and receiving antennas, to exam if there exists certain correlation matrices that can improve the performance of general STCs in high SNR regimes. Previous studies on the effect of spatial correlations over single-keyhole channels only concentrated on orthogonal STCs, while our study here is for general STCs and can also be used to explain previous findings for orthogonal STCs.
\end{abstract}


\begin{IEEEkeywords}
Space-time codes, spatial correlations, MIMO, keyhole fading.
\end{IEEEkeywords}

%
\IEEEpeerreviewmaketitle

\newtheorem{Theorem}{Theorem}
\newtheorem{Lemma}{Lemma}
\newtheorem{Proposition}{Proposition}

\section{Introduction}\label{Sec: Introduction}
It is well known that multiple-input and multiple-output (MIMO) systems can
achieve better spectral efficiency, without requiring extra power consumption and bandwidth expansion.
In reality, however, individual antennas could be spatially correlated due to insufficient
antenna spacing\cite{Chuah2002}. The effect of spatial correlation on general space-time codes (STCs) has been extensively studied for Rayleigh fading channels, and it has been shown that in asymptotically high signal-to-noise ratio (SNR) regimes, the transmission correlations always degrade the pair-wise error probability (PEP) performance \cite{Wang2004} \cite{Bolcskei2000}, while only in the asymptotically low SNR regimes, the transmission correlations may either improve or degrade the PEP performance \cite{Wang2004}.

In MIMO channels, if the scattering environment is not-so-rich, it is demonstrated in \cite{Chizhik2002} \cite{Gesbert2002} that MIMO channels can experience single-keyhole conditions,
where despite rich local scattering and independent transmitting and receiving signals, the system only has a rank-1 MIMO channel matrix. A number of papers have studied the performance of single-keyhole channels
\cite{Shin2003A, Levin2005, Muller2007, Gong2007, Sanayei2007, Yahampath2007} and its extension: multi-keyhole conditions \cite{Shin2008, Zhong2011, Yang2011, Levin2008, Levin2011}.

In the literature, the effect of spatial correlation on the performance of STCs over the single-keyhole channel has only been investigated for orthogonal space-time block codes (OSTBCs). It was shown that the for OSTBCs, the spatial correlations between transmission antennas always degrade the PEP performance \cite{Yahampath2007} \cite{Shin2008}.
Particularly, \cite{Yahampath2007} investigated OSTBCs with linear precoding, and showed that the correlations between transmission antennas can only degrade the performance in single-keyhole conditions.
While in \cite{Shin2008}, researchers used
majorization relations of the correlation matrices to show
that for OSTBCs, the correlations always degrade the PEP performance in multi-keyhole conditions, and this result is also applicable to the single-keyhole case.
However, when the STCs are not orthogonal, the effect of the spatial correlations on the PEP performance is still not clear in single-keyhole channels.
In this correspondence, we address this issue.  Instead of using majorization relations, which cannot be applied to non-orthogonal STCs in single-keyhole channels, we provide a new measure to exam if there exists a certain correlation matrix that can improve the PEP performance of general STCs in single-keyhole channels. We will prove that, very different from orthogonal codes, when the number of transmission antennas is greater than the number of receiving antennas, the PEP performance of non-orthogonal STCs \emph{can be improved by the transmission correlations in single-keyhole conditions,} \emph{even in the high SNR regimes}. This depends on how the correlation matrix $\mathbf{P}$ beamforms the code difference matrix $\mathbf{\Delta}$.
The major results of this correspondence can be summarized as follows:
\begin{itemize}
  \item In the high SNR regimes, when $M>N$ (the number of transmission antennas is greater than the number of receiving antennas), depending on how the correlation matrix beamforms the code word difference matrix, the correlations can either degrade or improve the PEP performance. Particularly we provide a new measure: let $0 \le \lambda_{1} \le \lambda_{2} \le \cdots \le \lambda_{M} $ be the eigenvalues of $\mathbf{\Delta}^H \mathbf{\Delta}$ in ascending order, and $\bar{\lambda}$ be their average, if there is an integer $L$ such that
        \begin{align}\label{Eq: Eig_Condition}
            \bar{\lambda}^N\left(\sum_{i=L+1}^M\frac{\lambda_{i}^{-1}}{M}\right)^N
            <\frac{\Gamma(M-N)\Gamma(M-L)}{\Gamma(M)\Gamma(M-L-N)},
        \end{align}
      we can always find certain correlation matrices that can improve the PEP performance. We also provide one form of such matrices.
\end{itemize}
\begin{itemize}
  \item In the high SNR regimes, when $M\le N$ (the number of transmission antennas is smaller or equal to the number of receiving antennas), the transmission correlations always degrade the PEP performance.
\end{itemize}
It is worth mentioning that OSTBCs can never satisfy the condition in \eqref{Eq: Eig_Condition}, therefore the transmission correlations always degrade the PEP performance, which is consistent with the findings in \cite{Yahampath2007} and \cite{Shin2008}.
The above results for MIMO single-keyhole channels are proved in Theorem \ref{Th: M_greater_N} and
Theorem \ref{Th: M_smaller_N}.

\emph{Notations}: In this correspondence, $\exp(\cdot)$, $\Gamma(\cdot)$, and $Q(\cdot)$ mean the exponential function, the Gamma function, and the Gaussian $Q$ function, respectively; $\mathbb{E}_X(\cdot)$, $X|Y$, $\|\cdot\|_F$, $\|\cdot\|$, $|\cdot|$, $(\cdot)^T$, $(\cdot)^H$, $det(\cdot)$, $R(\cdot)$, and $trace(\cdot)$ denote the expectation over the density of $X$, the conditional random variable of $X$ given $Y$, the Frobenius norm of a matrix, the norm of a matrix, the magnitude of a complex number, the transpose, the conjugate transpose, the determinant, the rank, and the trace of a matrix, respectively; $A \doteq B$ means that $A$ is equal to $B$ in the limit, and $X \sim Y$ means that $X$ is identically distributed with $Y$.

\section{Channel Model}\label{Sec: Channel_Model}
In the general MIMO channel, the signal model is given by
\begin{align} \label{Eq: Signaling_Model}
\mathbf{R}=\sqrt{\frac{\bar{\gamma}}{M}}\mathbf{C}\mathbf{H}+\mathbf{W},
\end{align}
where the $T \times N$ matrix $\mathbf{R}$ represents the received signal,
$\mathbf{C}$ is the $T\times M$ transmitted codeword matrix, $\bar{\gamma}$ is the average
signal-to-noise ratio (SNR), and $\mathbf{W}$ is the zero-mean additive
circularly symmetric complex Gaussian noise matrix with size
$T \times N$,  whose elements have unit variance per dimension.
It is assumed that the channel state information (CSI) is perfectly
known at the receiver and unknown at the transmitter.

When the communication channel experiences the single-keyhole condition, the channel matrix $\mathbf{H}$ can be modeled by
\begin{align} \label{Eq: Single_Keyhole_Channel_Matrix}
\mathbf{H}=\mathbf{P}^{\frac{1}{2}}\mathbf{h}\mathbf{g}^T.
\end{align}
where the elements of the vector $\mathbf{h}$, i.e. $h_{m}$'s ($n=1,...,M$) represent the normalized channel gains from $M$ transmitting antennas to the keyhole, and the elements of the vector $\mathbf{g}$, i.e. $g_{n}$'s ($n=1,...,N$)  represent the normalized channels from the keyhole to $N$ receiving antennas, and both $h_{m}$'s and $g_{n}$'s can be modeled as complex Gaussian random variable with zero mean and unity variance.
The $M \times M$ matrix $\mathbf{P}$ is the correlation matrix of the transmission antennas. When $\mathbf{P}=\mathbf{I}$ (the identity matrix), the model given in \eqref{Eq: Single_Keyhole_Channel_Matrix} reduces to the case that transmitting antennas are spatially independent, as studied in \cite{Sanayei2007}.

\section{Effect of Transmission Correlations on Space-time Code Performance}
PEP, the probability of transmitting codeword $\mathbf{C}$ over $T$ time slots and deciding in favor of another codeword $\mathbf{\mathbf{C}'}$  at the decoder, generally serves as a design criteria for STCs.
When signals transmit over a fading channel with channel matrix $\mathbf{H}$, the code words distance between $\mathbf{C}$ and $\mathbf{C}'$ is defined by the random variable $\|\mathbf{\Delta}\mathbf{H}\|_F$, where $\mathbf{\Delta} \triangleq \mathbf{C}-\mathbf{C}'$ is the code word difference matrix, and $\|\cdot\|_F$ is the Frobenius norm. The PEP of a Gaussian noise channel can be evaluated by averaging the density of $\|\mathbf{\Delta}\mathbf{H}\|_F$ over the $Q$ function as
\begin{align} \label{Eq: PEP_Q}
P(\mathbf{C}\rightarrow \mathbf{C}'|\mathbf{H}) =Q\left(\sqrt{\frac{\bar{\gamma}}{M}\|\mathbf{\Delta}\mathbf{H}\|_F^2}\right).
\end{align}
Using an alternative representation of the $Q$ function, we have
\begin{align} \label{Eq: PEP_exp}
P(\mathbf{C}\rightarrow \mathbf{C}'|\mathbf{H})
=\frac{1}{\pi}\int_{\theta=0}^{\infty}\exp\left(-\frac{\bar{\gamma}}{M}\frac{\|\mathbf{\Delta}\mathbf{H}\|_F^2}{2\sin^2\theta}\right)d\theta.
\end{align}
To find the PEP, we reconsider the code words distance
\begin{align}
\|\mathbf{\Delta}\mathbf{H}\|_F=\|\mathbf{\Delta}\mathbf{P}^{\frac{1}{2}}\mathbf{h}\mathbf{g}^T\|_F,
\end{align}
it is clear that when spatial correlations present between the transmission antennas, the code word difference matrix $\mathbf{\Delta}$ is modified by the correlation matrix $\mathbf{P}$ and becomes
\begin{align}
\mathbf{\Delta}'=\mathbf{\Delta}\mathbf{P}^{\frac{1}{2}}.
\end{align}
Therefore the single-keyhole channel with transmission correlation $\mathbf{P}$ and code word difference matrix $\mathbf{\Delta}$ can be viewed as the single-keyhole channel with spatially independent transmission antennas and code word difference matrix $\mathbf{\Delta}'$. With the above observation, the asymptotic form of PEP for single-keyhole channel with correlation matrix $\mathbf{P}$ can be obtained by using the result for independent transmission antennas over the single-keyhole channel given by \cite{Sanayei2007}: we only need to replace the eigenvalues of $\mathbf{\Delta}$ by $\mathbf{\Delta}'$ for the PEP expression. For distinct $\rho_i$'s, the asymptotic PEP is obtained as
\begin{align} \label{Eq: PEP}
  & P(\mathbf{C}\rightarrow \mathbf{C}') \doteq \nonumber \\
  &  \left\{
  \begin{array}{l l l}
    &C_1\left(\prod_{i=1}^{R(\mathbf{\Delta}')}\rho_{i}\right)^{-1}\bar{\gamma}^{-R(\mathbf{\Delta}')},
    \quad \text{if $N>R(\mathbf{\Delta}')$}; \\ 
    &C_1\left(\prod_{i=1}^{R(\mathbf{\Delta}')}\rho_{i}\right)^{-1} (\ln\bar{\gamma})\bar{\gamma}^{-R(\mathbf{\Delta}')},
    \quad \text{if $N=R(\mathbf{\Delta}')$}; \\ 
    &C_3\sum_{i=1}^{R(\mathbf{\Delta})'}\frac{\ln \rho_{i}}{\rho_{i}^N}\prod_{j\neq i}\frac{\rho_i}{\rho_i-\rho_j}\bar{\gamma}^{-N},
    \quad \text{if $N<R(\mathbf{\Delta}')$}. 
  \end{array} \right.
\end{align}
For identical $\rho_i$'s,  we have
\begin{align} \label{Eq: PEP_identical}
   P(\mathbf{C}\rightarrow \mathbf{C}') \doteq
    \left\{
  \begin{array}{l l l}
    &C_1\rho^{-R(\mathbf{\Delta}')}\bar{\gamma}^{-R(\mathbf{\Delta}')},
    \quad \text{if $N>R(\mathbf{\Delta}')$}; \\ 
    &C_1\rho^{-R(\mathbf{\Delta}')} (\ln\bar{\gamma})\bar{\gamma}^{-R(\mathbf{\Delta}')},
    \quad \text{if $N=R(\mathbf{\Delta}')$}; \\ 
    &C_2 \rho^{-R(\mathbf{\Delta}')}\bar{\gamma}^{-N},
    \quad \text{if $N<R(\mathbf{\Delta}')$}. 
  \end{array} \right.
\end{align}
where $\rho_i$'s are the eigenvalues of $\mathbf{P}^{\frac{H}{2}}\mathbf{\Delta}^H\mathbf{\Delta}\mathbf{P}^{\frac{1}{2}}$, and
\begin{align*}
C_1=
\frac{\Gamma(\frac{1}{2}+R(\mathbf{\Delta}'))}{2\sqrt{\pi}\Gamma(1+R(\mathbf{\Delta}'))}
\times
\frac{N^{R(\mathbf{\Delta}')}\Gamma(N-R(\mathbf{\Delta}'))}{\Gamma(N)},
\end{align*}
\begin{align*}
C_2=
\frac{\Gamma(\frac{1}{2}+N)}{2\sqrt{\pi}\Gamma(1+N)}
\times
\frac{N^{N}\Gamma(R(\mathbf{\Delta}')-N)}{\Gamma(R(\mathbf{\Delta}'))},
\end{align*}
and
\begin{align*}
C_3
=\frac{\Gamma(\frac{1}{2}+N)}{2\sqrt{\pi}\Gamma(1+N)}
\times
(-1)^{N-1}
\times
\frac{1}{\Gamma(N)}.
\end{align*}
Although the asymptotic PEP for correlated transmission antennas has been obtained, our main concern, how the transmission correlation $\mathbf{P}$ affects the performance of space-time codes, is still not clearly answered, especially for the case that $M>N$. Actually, from the expressions in \eqref{Eq: PEP} and \eqref{Eq: PEP_identical}, there is no clue about this. The major work of this correspondence is to address this issue.

To investigate the effect of transmission correlations on the PEP performance, we first present the following facts and inferences about the correlation matrix $\mathbf{P}$ and the code difference matrix $\mathbf{\Delta}$, let $\lambda_{i}$'s denote the eigenvalues of $\mathbf{\Delta}^H\mathbf{\Delta}$. Let $\nu_i$'s denote the eigenvalues of $\mathbf{P}$ and $\rho_{i}$'s denote the eigenvalues of $\mathbf{P}^{\frac{H}{2}}\mathbf{\Delta}^H\mathbf{\Delta}\mathbf{P}^{\frac{1}{2}}$, then we have
\begin{enumerate}
  \item $trace(\mathbf{P})=M$, or equivalently,
  \begin{align}
  \sum_{i=1}^M\nu_i=M,
  \end{align}
  where $\nu_i$'s are the eigenvalues of $\mathbf{P}$. This is because the total transmission power is fixed.
  \item \begin{align}
        \left(\prod_{i=1}^M\nu_i\right)\left(\prod_{i=1}^M\lambda_i\right)=\prod_{i=1}^M\rho_i.
        \end{align}
        This is from the fact that $det(\mathbf{P}^{\frac{H}{2}}\mathbf{\Delta}^H\mathbf{\Delta}\mathbf{P}^{\frac{1}{2}})=det(\mathbf{P}) \times det(\mathbf{\Delta}^H\mathbf{\Delta})$
\end{enumerate}
In this correspondence, it is assumed that the code construction achieves full rank, i.e. $R(\mathbf{\Delta)}=M$. We now start to analyze the effect of correlations on the PEP performance. We consider the cases that $M \le N$ and $M > N$ separately.

\subsection{More Transmission Antennas than Receiving Antennas: $M > N$}
In this case, we first provide the following Lemma:
\begin{Lemma}\label{Lemma: Bound}
Let $\lambda_i$ be real for $i\in \{1,2,...,M\}$ and $\bar{\lambda}=\frac{\sum_{i=1}^M\lambda_i}{M}$. Let $X_i$, $i\in \{1,2,...,M\}$ be a set of i.i.d random variables, $Y$ be another random variable which is independent with $X_i$'s, then we have
\begin{align}\label{Eq: Lemma_Balanced_Eigen_Value}
\mathbb{E}\left(f\left(Y\sum_{i=1}^M \bar{\lambda} X_i\right)\right) \le \mathbb{E}\left(f\left(Y\sum_{i=1}^M \lambda_i X_i\right)\right),
\end{align}
where $f(\cdot)$ is a convex function. The equal sign holds when $\lambda_{i}=\bar{\lambda}$ for all $i\in \{1,2,...,M\}$.
\end{Lemma}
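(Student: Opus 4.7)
The plan is to use a symmetrization argument that exploits the fact that $X_1,\ldots,X_M$ are i.i.d. For any permutation $\pi$ on $\{1,\ldots,M\}$, the vectors $(X_1,\ldots,X_M)$ and $(X_{\pi(1)},\ldots,X_{\pi(M)})$ have the same joint distribution, and $Y$ is independent of all of them. Therefore
\begin{align*}
\mathbb{E}\, f\!\left(Y\sum_{i=1}^{M}\lambda_{i}X_{i}\right)
= \mathbb{E}\, f\!\left(Y\sum_{i=1}^{M}\lambda_{\pi(i)}X_{i}\right)
\end{align*}
holds for every $\pi$, so averaging over all $M!$ permutations preserves the left-hand side and gives
\begin{align*}
\mathbb{E}\, f\!\left(Y\sum_{i=1}^{M}\lambda_{i}X_{i}\right)
= \mathbb{E}\!\left[\frac{1}{M!}\sum_{\pi} f\!\left(Y\sum_{i=1}^{M}\lambda_{\pi(i)}X_{i}\right)\right].
\end{align*}

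Next I would condition on the values of $Y$ and the $X_{i}$'s and apply Jensen's inequality. Since $f$ is convex, the map $s \mapsto f(ys)$ is convex in $s$ for every fixed scalar $y$ (regardless of the sign of $y$); hence, treating the uniform measure on permutations as a probability distribution,
\begin{align*}
\frac{1}{M!}\sum_{\pi} f\!\left(Y\sum_{i=1}^{M}\lambda_{\pi(i)}X_{i}\right)
\;\geq\; f\!\left(Y\cdot \frac{1}{M!}\sum_{\pi}\sum_{i=1}^{M}\lambda_{\pi(i)}X_{i}\right).
\end{align*}
A routine permutation count shows that, for each fixed $i$, the value $\pi(i)$ hits each element of $\{1,\ldots,M\}$ exactly $(M-1)!$ times as $\pi$ ranges over $S_M$, so $\frac{1}{M!}\sum_{\pi}\lambda_{\pi(i)}=\bar{\lambda}$ and the right-hand side collapses to $f\!\left(Y\sum_{i=1}^{M}\bar{\lambda}X_{i}\right)$. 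Taking expectations and combining with the displayed equality yields \eqref{Eq: Lemma_Balanced_Eigen_Value}.

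For the equality case, when $\lambda_{i}=\bar{\lambda}$ for all $i$ both arguments of $f$ coincide deterministically, so equality is immediate. No real obstacle arises in the argument; the only point requiring care is the convexity in $s$ of $s\mapsto f(ys)$ for arbitrary (possibly negative) $y$, which follows directly from the definition of convexity, and the counting identity $\frac{1}{M!}\sum_{\pi}\lambda_{\pi(i)}=\bar{\lambda}$. This symmetrization route is preferred to a majorization or Schur-convexity based proof because it makes the role of the i.i.d.\ assumption on the $X_{i}$'s transparent and handles the extra random factor $Y$ at no additional cost.
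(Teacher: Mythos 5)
Your proof is correct, and it takes a genuinely different route from the paper's. The paper fixes $y$, sets $X=y\sum_i\bar{\lambda}X_i$ and $W=y\sum_i\lambda_iX_i$, and uses the exchangeability of the $X_i$'s \emph{conditionally on} $X$ to show that the perturbation $Z=X-W$ satisfies $\mathbb{E}(Z\,|\,X)=0$; conditional Jensen then gives $\mathbb{E}(f(W))=\mathbb{E}\big(\mathbb{E}(f(X-Z)\,|\,X)\big)\ge\mathbb{E}(f(X))$, i.e., $W$ is exhibited as a mean-preserving spread of the balanced sum. You instead exploit the same i.i.d.\ symmetry \emph{before} applying Jensen: permutation invariance of the joint law of $(Y,X_1,\dots,X_M)$ lets you replace $\mathbb{E}f(Y\sum_i\lambda_iX_i)$ by the average over all $M!$ permutations, and then finite Jensen (applied pointwise to the $M!$ values, so you do not even need the observation about $s\mapsto f(ys)$) pulls the average inside $f$, where the counting identity $\frac{1}{M!}\sum_\pi\lambda_{\pi(i)}=\bar{\lambda}$ collapses the argument to the balanced sum. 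What your version buys is that it is entirely elementary: it avoids conditional distributions and the (unstated) integrability needed for $\mathbb{E}(X_i\,|\,X)$ to be well defined, and it handles $Y$ with no extra work since it is simply carried along in the joint law. What the paper's version buys is the standard ``dilation'' picture, which makes transparent \emph{why} convex functionals increase. Both arguments hinge on the same exchangeability of the $X_i$'s, and your treatment of the equality case (both arguments of $f$ coincide when all $\lambda_i=\bar{\lambda}$) matches the sufficiency claimed in the lemma.
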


\begin{proof}
To prove \eqref{Eq: Lemma_Balanced_Eigen_Value}, we first prove that \eqref{Eq: Lemma_Balanced_Eigen_Value} holds for any fixed value of $Y$, i.e.
\begin{align} \label{Eq: Lemma_Balanced_Eigen_Value_Fixed_Y}
\mathbb{E}\left(f\left(y\sum_{i=1}^M \bar{\lambda} X_i\right)\right) \le \mathbb{E}\left(f\left(y\sum_{i=1}^M \lambda_i X_i\right)\right),
\end{align}
where $y$ is any possible value that the random variable $Y$ can take. It is easy to see that
\eqref{Eq: Lemma_Balanced_Eigen_Value_Fixed_Y} implies
\eqref{Eq: Lemma_Balanced_Eigen_Value}.

We define
\begin{align}
X\triangleq y\sum_{i=1}^M\bar{\lambda}X_i,
\end{align}
\begin{align}
W\triangleq y\sum_{i=1}^M\lambda_iX_i,
\end{align}
and
\begin{align}
Z\triangleq X-W.
\end{align}
Base on the form of $X$, it is easy to see that the conditional random variables $X_i|X$, $i\in \{1,...,M\}$, are identically distributed, which implies
\begin{align}
\mathbb{E}(X_1|X)=\mathbb{E}(X_2|X)= \cdots = \mathbb{E}(X_M|X).
\end{align}
Therefore
\begin{align}
\mathbb{E}(Z|X)&=\sum_{i=1}^M \bar{\lambda}\mathbb{E}(X_i|X)-\sum_{i=1}^M \lambda_{i}\mathbb{E}(X_i|X)\nonumber\\
      &=\mathbb{E}(X_1|X)\left(\sum_{i=1}^M \bar{\lambda}-\sum_{i=1}^M \lambda_{i}\right)=0.
\end{align}
Since $f(\cdot)$ is convex, by Jensen's inequality we have
\begin{align}
\mathbb{E}(f(X-Z)|X)
&\ge f(\mathbb{E}((X-Z)|X)) \nonumber\\
&=f(X-0)=f(X)
\end{align}
Therefore
\begin{align}
\mathbb{E}(f(W))=\mathbb{E}(\mathbb{E}(f(X-Z)|X))\ge \mathbb{E}(f(X)),
\end{align}
and consequently \eqref{Eq: Lemma_Balanced_Eigen_Value} holds.
\end{proof}


Now we present the main result for the effect of correlations on the PEP performance when $M>N$:
\begin{Theorem}\label{Th: M_greater_N}
In the MIMO single-keyhole channels, when $M>N$, if we can find some integer $L$ between $1$ and $M-N-1$,
i.e. $1 \le L \le M-N-1$ such that
\begin{align}\label{Eq: EigenValueCriteriaCorrelationAffect}
\bar{\lambda}^N\left(\sum_{i=L+1}^M\frac{\lambda_{i}^{-1}}{M}\right)^N
<\frac{\Gamma(M-N)\Gamma(M-L)}{\Gamma(M)\Gamma(M-L-N)},
\end{align}
then there always exist certain correlation matrices that can improve the PEP performance in the asymptotic high SNR regimes.
Here $0 \le \lambda_{1} \le \lambda_{2} \le \cdots \le \lambda_{M} $ are the $eigenvalues$ of $\mathbf{\Delta}^H \mathbf{\Delta}$ in ascending order, and $\bar{\lambda}$ is their average.
\end{Theorem}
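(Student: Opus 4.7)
My plan is to exhibit one explicit correlation matrix $\mathbf{P}$ whose high-SNR PEP is strictly smaller than the $\mathbf{P}=\mathbf{I}$ PEP whenever \eqref{Eq: EigenValueCriteriaCorrelationAffect} holds. This $\mathbf{P}$ is engineered to do two things at once: it drops the rank of $\mathbf{\Delta}'$ from $M$ to $M-L$ (which, by the hypothesis $L\le M-N-1$, still exceeds $N$, keeping us in the third case of \eqref{Eq: PEP_identical}), and it makes all surviving $\rho_i$'s equal, which by Lemma \ref{Lemma: Bound} is the ``best-case'' configuration for the PEP at fixed trace. The baseline $\mathbf{P}=\mathbf{I}$ PEP will in turn be lower-bounded, again using Lemma \ref{Lemma: Bound}, by its homogenized version in which every $\lambda_i$ is replaced by $\bar\lambda$.

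Concretely, writing $\mathbf{\Delta}^H\mathbf{\Delta}=\mathbf{U}\operatorname{diag}(\lambda_1,\ldots,\lambda_M)\mathbf{U}^H$, I would set
\begin{equation*}
\mathbf{P}=\mathbf{U}\operatorname{diag}\!\left(0,\ldots,0,\tfrac{c}{\lambda_{L+1}},\ldots,\tfrac{c}{\lambda_{M}}\right)\mathbf{U}^{H},\qquad c=\frac{M}{\sum_{i=L+1}^{M}\lambda_{i}^{-1}},
\end{equation*}
so that $\operatorname{trace}(\mathbf{P})=M$ and the eigenvalues of $\mathbf{P}^{H/2}\mathbf{\Delta}^{H}\mathbf{\Delta}\mathbf{P}^{1/2}$ are $\rho_{i}=0$ for $i\le L$ and $\rho_{i}=c$ for $i>L$. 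Since $L\le M-N-1$, we have $R(\mathbf{\Delta}')=M-L>N$, the identical-$\rho$ third case of \eqref{Eq: PEP_identical} applies, and I read off a closed form $P(\mathbf{P})\doteq C_{2}(M-L)\,c^{-N}\bar\gamma^{-N}$, where $C_{2}(M-L)$ denotes the constant $C_{2}$ evaluated at $R(\mathbf{\Delta}')=M-L$.

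For the baseline $\mathbf{P}=\mathbf{I}$, substituting the eigendecomposition of $\mathbf{\Delta}^{H}\mathbf{\Delta}$ into \eqref{Eq: PEP_exp} and writing $X_{i}=|\tilde h_{i}|^{2}$ for the i.i.d.\ exponentials obtained by rotating $\mathbf{h}$ by $\mathbf{U}^{H}$, the conditional integrand at each fixed $\theta$ becomes $\exp(-Y_{\theta}\sum_{i}\lambda_{i}X_{i})$ with $Y_{\theta}=\bar\gamma\|\mathbf{g}\|^{2}/(2M\sin^{2}\theta)$ independent of the $X_{i}$. Applying Lemma \ref{Lemma: Bound} with the convex $f(x)=\exp(-x)$ at each $\theta$, then integrating in $\theta$ and taking expectation over $\mathbf{g}$, shows that replacing every $\lambda_{i}$ by $\bar\lambda$ only decreases the expectation. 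Reading off \eqref{Eq: PEP_identical} with $\rho=\bar\lambda$ and $R(\mathbf{\Delta}')=M$ then gives a closed-form lower bound $P(\mathbf{I})\ge C_{2}(M)\,\bar\lambda^{-N}\bar\gamma^{-N}$ of the same functional type as $P(\mathbf{P})$.

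The final step is bookkeeping: writing $P(\mathbf{P})<C_{2}(M)\bar\lambda^{-N}\bar\gamma^{-N}$ and substituting the values of $c$ and the two $C_{2}$'s, the $\bar\gamma^{-N}$ powers cancel and the gamma-function factors collapse exactly to $\Gamma(M-N)\Gamma(M-L)/[\Gamma(M)\Gamma(M-L-N)]$ on the right, while the $\bar\lambda$ and $\lambda_{i}^{-1}$ factors collapse to $\bar\lambda^{N}\bigl(\sum_{i=L+1}^{M}\lambda_{i}^{-1}/M\bigr)^{N}$ on the left, recovering \eqref{Eq: EigenValueCriteriaCorrelationAffect} verbatim. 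The main obstacle is really this final constant-matching step: ensuring that the $N<R(\mathbf{\Delta}')$ regime is active throughout $1\le L\le M-N-1$ so that the same branch of \eqref{Eq: PEP_identical} governs both sides, and keeping the gamma and $\bar\gamma$ factors correctly paired so that the coefficient comparison is indeed what decides the inequality at high SNR.
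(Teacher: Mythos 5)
Your proposal is correct and follows essentially the same route as the paper's own proof: the same rank-deficient correlation matrix $\mathbf{P}=\mathbf{U}\operatorname{diag}\bigl(0,\dots,0,c/\lambda_{L+1},\dots,c/\lambda_{M}\bigr)\mathbf{U}^{H}$ that equalizes the surviving $\rho_i$'s, the same use of Lemma~\ref{Lemma: Bound} with $f=\exp(-\cdot)$ to lower-bound the uncorrelated PEP by its $\bar{\lambda}$-homogenized version, and the same leading-constant comparison that collapses to \eqref{Eq: EigenValueCriteriaCorrelationAffect}. (You also use the correct $\rho^{-N}$ scaling in the $N<R(\mathbf{\Delta}')$ branch, consistent with the paper's proof rather than with the $\rho^{-R(\mathbf{\Delta}')}$ exponent printed in \eqref{Eq: PEP_identical}.)
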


\begin{proof}
When the transmission antennas are independent, we have
\begin{align}
\|\mathbf{\Delta} \mathbf{H}\|_F^2=\|\mathbf{g}\|^2\sum_{i=1}^M\lambda_{i}|h_i|^2.
\end{align}
Since the exponential function is convex for real numbers, by applying Lemma \ref{Lemma: Bound} we have
\begin{align}
\mathbb{E}\left(\exp\left(-\|\mathbf{g}\|^2\sum_{i=1}^M\lambda_{i}|h_i|^2\right)\right)
\ge
\mathbb{E}\left(\exp\left(-\|\mathbf{g}\|^2\sum_{i=1}^M\bar{\lambda}|h_i|^2\right)\right),
\end{align}
which implies that when the transmission antennas are independent, the PEP can be bounded as following:
\begin{align}
P_{\mathbf{I}}(\mathbf{C} \rightarrow \mathbf{C}')
\ge \frac{\Gamma(\frac{1}{2}+N)}{2\sqrt{\pi}\Gamma(1+N)} \times \frac{N^N\Gamma(M-N)}{\Gamma(M)}\bar{\lambda}^{-N}\bar{\gamma}^{-N}.
\end{align}
Now suppose that the eigendecompostion of $\mathbf{\Delta}^H \mathbf{\Delta}$ is $\mathbf{UVU}^H$, we consider the following class of correlation matrices for which $\mathbf{P}^{\frac{1}{2}}$ has singular value decomposition as
\begin{align}\label{Eq: CorrelationMatrixImprovePEP}
\mathbf{P}^{\frac{1}{2}}=\mathbf{U}\mathbf{S^{\frac{1}{2}}}\mathbf{D}^H,
\end{align}
where $\mathbf{D}$ is a unitary matrix and the diagonal matrix $\mathbf{S}$ with $\mathbf{S}_{i,i}=\nu_i$ satisfies the power constraint: $\sum_{i=1}^M\nu_i=M$. It follows that
\begin{align}
\|\mathbf{\Delta} \mathbf{P}^{\frac{1}{2}}\mathbf{h} \mathbf{g}^T \|_F^2
&=\|\mathbf{V}^{\frac{1}{2}}\mathbf{U}^H \mathbf{US^{\frac{1}{2}}D}^H \mathbf{h} \mathbf{g}^T \|_F^2\nonumber\\
&=\|\mathbf{V}^{\frac{1}{2}}\mathbf{S}^{\frac{1}{2}}\mathbf{D}^H \mathbf{h} \mathbf{g}^T\|_F^2\nonumber\\
&\sim \|\mathbf{V}^{\frac{1}{2}}\mathbf{S}^{\frac{1}{2}} \mathbf{h} \mathbf{g}^T\|_F^2\nonumber\\
&=\sum_{i=1}^{R({\mathbf{P}})}\rho_{i} \sum_{n=1}^N\|g_n\|^2\|h_i\|^2,
\end{align}
where
\begin{align}
\rho_{i}=\nu_{i}\lambda_{i}
\end{align}
for all $\rho_i$'s.
Now we can have a correlation matrix $\mathbf{P}$ such that
\begin{align}
\nu_1=\nu_2=\cdots=\nu_L=0,
\end{align}
and
\begin{align}
\nu_{L+1},\cdots, \nu_{M}>0.
\end{align}
If we pose another constraint on $\nu_{i}$ such that the non-zero eigenvalues of
$\mathbf{P}^{\frac{H}{2}}\mathbf{\Delta}^H \mathbf{\Delta} \mathbf{P}^{\frac{1}{2}}$ are identical:
\begin{align}
\rho_{L+1}= \rho_{L+2} =\cdots =\rho_{M},
\end{align}
we have
\begin{align}
\rho_{L+1}=\rho_{L+2}=\cdots =\rho_{M}=M\left(\sum_{i=L+1}^{M}\frac{1}{\lambda_{i}}\right)^{-1}.
\end{align}
 In high SNR regimes, the PEP respective to $\mathbf{P}^{\frac{H}{2}}\mathbf{\Delta}^H \mathbf{\Delta} \mathbf{P}^{\frac{1}{2}}$ becomes
\begin{align}
&P_{\mathbf{P}}(\mathbf{C}\rightarrow \mathbf{C}')=\frac{\Gamma(\frac{1}{2}+N)}{2\sqrt{\pi}\Gamma(1+N)} \nonumber\\
& \times \frac{N^N\Gamma(M-L-N)}{\Gamma(M-L)}M^{-N}\left(\sum_{i=L+1}^{M}\frac{1}{\lambda_{i}}\right)^{N}\bar{\gamma}^{-N}.
\end{align}
Therefore $P_{\mathbf{P}}(\mathbf{C}\rightarrow \mathbf{C}')<P_{\mathbf{I}}(\mathbf{C}\rightarrow \mathbf{C}')$ if \eqref{Eq: EigenValueCriteriaCorrelationAffect} holds, i.e. the correlation matrix $\mathbf{P}$ defined in \eqref{Eq: CorrelationMatrixImprovePEP} improves the PEP performance.
\end{proof}

\subsection{Same or Less Transmission Antennas than Receiving Antennas: $M\le N$}
Now we consider the case that the number of transmission antennas is the same as or less than the number of receiving antennas, i.e. $M\le N$.
\begin{Theorem}\label{Th: M_smaller_N}
In the MIMO single-keyhole channels, when $M\le N$, the spatial correlations between transmission antennas always degrade the PEP performance in the high SNR regimes.
\end{Theorem}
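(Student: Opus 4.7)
The plan is to leverage the asymptotic PEP formulas \eqref{Eq: PEP} and \eqref{Eq: PEP_identical} together with the two facts already recorded about $\mathbf{P}$ and $\mathbf{\Delta}$, namely $\sum_i \nu_i = M$ and $(\prod_i \nu_i)(\prod_i \lambda_i) = \prod_i \rho_i$. The strategy is to split into two sub-cases according to whether $\mathbf{P}$ is full rank, and to show that any $\mathbf{P} \ne \mathbf{I}$ yields an asymptotically worse PEP than the independent case. Throughout, the hypothesis $R(\mathbf{\Delta}) = M \le N$ is used to pin down which branch of \eqref{Eq: PEP}/\eqref{Eq: PEP_identical} applies.

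Sub-case one: $\mathbf{P}$ is rank-deficient with $R(\mathbf{P}) = K < M$. Since $\mathbf{\Delta}' = \mathbf{\Delta}\mathbf{P}^{1/2}$ has rank $\min(M,K) = K < M \le N$, the branch $N > R(\mathbf{\Delta}')$ of \eqref{Eq: PEP} applies to $P_{\mathbf{P}}$ and gives diversity order $K$. The independent case has $R(\mathbf{\Delta}') = M$, yielding diversity order $M$ (with an extra $\ln\bar{\gamma}$ factor if $M = N$). Since $K < M$, $P_{\mathbf{P}}$ strictly dominates $P_{\mathbf{I}}$ in the high-SNR limit.

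Sub-case two: $\mathbf{P}$ is full rank. Then $R(\mathbf{\Delta}') = M$ and $P_{\mathbf{P}}$, $P_{\mathbf{I}}$ sit on the same branch of \eqref{Eq: PEP}/\eqref{Eq: PEP_identical}, so the ratio $P_{\mathbf{P}}/P_{\mathbf{I}}$ collapses to $(\prod_{i=1}^M \lambda_i)/(\prod_{i=1}^M \rho_i)$, which by Fact 2 equals $1/\prod_{i=1}^M \nu_i$. Under the trace constraint $\sum_i \nu_i = M$ with $\nu_i \ge 0$, the AM-GM inequality gives $\prod_i \nu_i \le 1$ with equality iff $\nu_i = 1$ for all $i$, i.e. iff $\mathbf{P} = \mathbf{I}$. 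Hence $P_{\mathbf{P}} \ge P_{\mathbf{I}}$ with strict inequality whenever $\mathbf{P} \ne \mathbf{I}$.

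The main obstacle is not the core inequality — the AM-GM step is immediate — but rather careful bookkeeping about which branch of the PEP expression applies. In particular, one must verify that when $M = N$ the logarithmic pre-factor $\ln\bar{\gamma}$ appears identically in both $P_{\mathbf{P}}$ and $P_{\mathbf{I}}$ in sub-case two so that the ratio cancels cleanly, and that in sub-case one the drop in diversity order is what produces strict asymptotic dominance regardless of the precise eigenvalue multiplicities of $\mathbf{P}^{H/2}\mathbf{\Delta}^H\mathbf{\Delta}\mathbf{P}^{1/2}$. Combining both sub-cases then establishes the theorem.
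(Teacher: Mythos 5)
Your proposal is correct and follows essentially the same route as the paper's own proof: split on whether $\mathbf{P}$ is rank-deficient (diversity-order loss) or full rank (AM--GM on the $\nu_i$ under the trace constraint, with equality only at $\mathbf{P}=\mathbf{I}$). Your extra bookkeeping about the $\ln\bar{\gamma}$ factor when $M=N$ and the matching branches of \eqref{Eq: PEP} is a welcome refinement the paper leaves implicit, but it does not change the argument.
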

\begin{proof}
If $\mathbf{P}$ is rank deficient, because we assume that $\mathbf{\Delta}^H\mathbf{\Delta}$ is full rank, we have
\begin{align}
R(\mathbf{P}^{\frac{H}{2}}\mathbf{\Delta}^H\mathbf{\Delta}\mathbf{P}^{\frac{1}{2}})<R(\mathbf{\Delta}^H \mathbf{\Delta}),
\end{align}
from the PEP given in Equation \eqref{Eq: PEP}, we can see that this will result in a reduction of the diversity order, hence the PEP performance is degraded.
If $\mathbf{P}$ is of full rank, it means $\prod_{i=1}^M\nu_i \neq 0$, and by applying the AM-GM inequality, we have
\begin{align}
\prod_{i=1}^M \nu_{i} \le \left(\frac{\sum_{i=1}^M \nu_i}{M}\right)^M=1,
\end{align}
therefore
\begin{align}
\prod_{i=1}^M\rho_i \le \prod_{i=1}^M\lambda_i.
\end{align}
Note that the equality only holds when $\mathbf{P}$ is an identity matrix. Therefore the PEP is always degraded by transmission correlations for the case that $M\le N$.
\end{proof}


\subsection{Examples and Simulations}
In this Section, we provide an example and perform Monte Carlo simulations for Theorem \ref{Th: M_greater_N} and Theorem \ref{Th: M_smaller_N}.\\
\emph{\textbf{Example}}
We consider some pair of codewords for which
\begin{align}
\mathbf{\Delta}^H\mathbf{\Delta} =
\left(
  \begin{array}{ccc}
   2                & -.95 + .029i     &  -.95 - .029i  \\
   -.95 - .029i   &  2               &  -.95 + .029i \\
   -.95 + .029i   & -.95 - .029i   &  2
  \end{array}
\right),
\end{align}
the eigenvalues are $\lambda_{1}=0.1$, $\lambda_{2}=2.9$ and $\lambda_{3}=3$. Suppose there are three transmission antennas ($M=3$) and one receiving antenna ($N=1$), then there exist an $L=1$ such that
\begin{align}
\bar{\lambda}^N\left(\sum_{i=L+1}^M\frac{\lambda_{i}^{-1}}{M}\right)^N=2\left(\frac{1/2.9+1/3}{3}\right)=0.45,
\end{align}
and
\begin{align}
\frac{\Gamma(M-N)\Gamma(M-L)}{\Gamma(M)\Gamma(M-L-N)}=\frac{\Gamma(3-1)\Gamma(3-1)}{\Gamma(3)\Gamma(3-1-1)}=0.5.
\end{align}
By Theorem \ref{Th: M_greater_N}, there exist some correlation matrices that can improve the PEP performance. One of such matrices can be given by
\begin{align}
\mathbf{P}_1&=\nu_1\mathbf{u}_1\mathbf{u_1}^H+\nu_2\mathbf{u}_2\mathbf{u}_2^H+\nu_3\mathbf{u}_3\mathbf{u}_3^H \nonumber\\
          &=\left(
  \begin{array}{ccc}
   1             &  -.5 - .0144i   &  -.5 + .0144i \\
   -.5 + .0144i  &  1              & -.5 - .0144i \\
   -.5 - .0144i  & -.5 + .0144i    &  1
  \end{array}
\right).
\end{align}
where
\begin{align}
\nu_1=0,
\end{align}
\begin{align}
\nu_2=\frac{M\left(\sum_{i=L+1}^{M}\frac{1}{\lambda_{i}}\right)^{-1}}{\lambda_2}=1.525,
\end{align}
\begin{align}
\nu_3=\frac{M\left(\sum_{i=L+1}^{M}\frac{1}{\lambda_{i}}\right)^{-1}}{\lambda_3}=1.475,
\end{align}
and $\mathbf{u}_1$, $\mathbf{u}_2$, $\mathbf{u}_3$ are the eigenvectors of $\mathbf{\Delta}^H\mathbf{\Delta}$.
From the simulations, we can see that the transmission correlations defined by $\mathbf{P}_1$ can bring about $1.5$ dB gains for the PEP performance, which is illustrated by the square line in Fig. \ref{Fig: PEP_Correlation_Enhance_M3N1}.
 Now we consider another correlation matrix $\mathbf{P}_2$ that has the same eigenvectors as that of $\mathbf{P}_1$ but different eigenvalues: $\nu_1=1.8$, $\nu_2=0.7$ and $\nu_3=0.5$:
\begin{align}\label{Eq: CorrelationMatrix_Degrade_PEP_M3N1}
\mathbf{P}_2=\left(
  \begin{array}{ccc}
   1             & .4 + .0577i   &.4 - .0577i \\
   .4 - .0577i   & 1             &.4 + .0577i  \\
   .4 + .0577i   & .4 - .0577i   & 1
  \end{array}
\right).
\end{align}
For this correlation matrix $\mathbf{P}_2$, which does not satisfy the new measure given in \eqref{Eq: EigenValueCriteriaCorrelationAffect},  we can see that the transmission correlations degrade the PEP performance, as illustrated by the PEP curve (marked by circle) in
Fig. \ref{Fig: PEP_Correlation_Enhance_M3N1}.

Now we keep everything unchanged except that there are four receiving antennas ($N=4$).  By Theorem \ref{Th: M_smaller_N}, all the correlation matrices are expected to degrade the PEP performance in the high SNR regimes, and this is confirmed in Fig. \ref{Fig: PEP_Correlation_Degradtion_M3N4}: both $\mathbf{P}_1$ and $\mathbf{P}_2$ degrade the PEP performance .

\begin{figure}
\centering
  \includegraphics[scale=0.6]{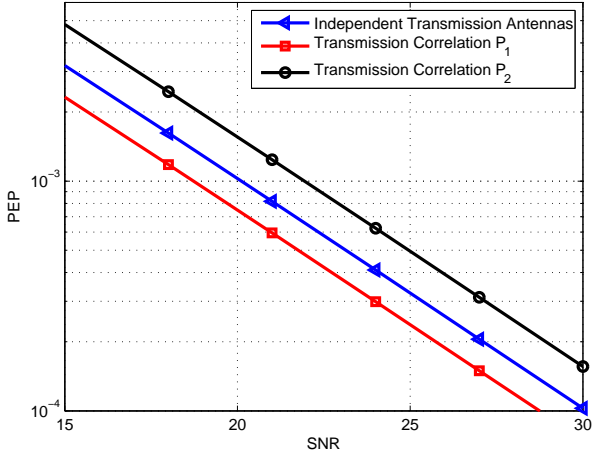}\\
  \caption{Effect of transmission correlations on MIMO single-keyhole channel for the case that $M > N$ (for Theorem \ref{Th: M_greater_N}): in the asymptotically high SNR regime, transmission correlations may either improve or degrade the PEP performance. In this simulation, the correlation matrix $\mathbf{P}_1$ (satisfies the new measure in \eqref{Eq: EigenValueCriteriaCorrelationAffect}) improves the PEP performance, while the correlation matrix $\mathbf{P}_2$ (does not satisfy the new measure in \eqref{Eq: EigenValueCriteriaCorrelationAffect}) degrades the PEP performance.}\label{Fig: PEP_Correlation_Enhance_M3N1}
\end{figure}

\begin{figure}
\centering
  \includegraphics[scale=0.6]{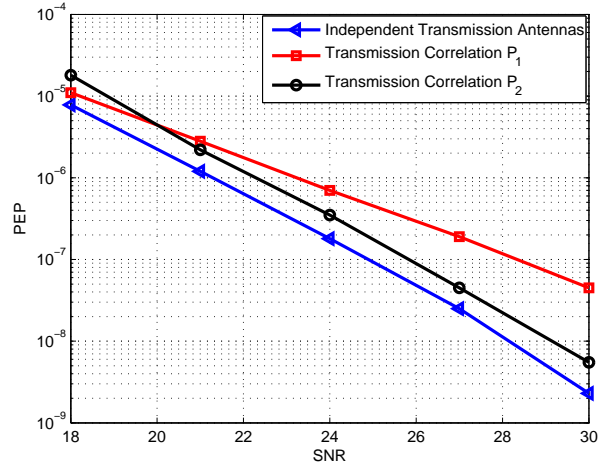}\\
  \caption{Effect of transmission correlations on MIMO single-keyhole channel for the case that $M \le N$ (for Theorem \ref{Th: M_smaller_N}): in the asymptotically high SNR regimes, transmission correlations always degrade the PEP performance. In this simulation, both $\mathbf{P}_1$ and $\mathbf{P}_2$ degrade the PEP performance.}\label{Fig: PEP_Correlation_Degradtion_M3N4}.
\end{figure}

It is  worth mentioning here that when the STC is orthogonal (i.e., all the eigenvalues of $\mathbf{\Delta}^H \mathbf{\Delta}$ are identical), Theorem \ref{Th: M_greater_N} will never be satisfied since Lemma \ref{Lemma: Bound} implies that the transmission correlations always degrade the PEP performance for orthogonal codes, which is consistent with the results in \cite{Yahampath2007} and \cite{Shin2008}, where majorization was used to show this property for orthogonal code in the MIMO single-keyhole channel.
Finally, we compare the effects of transmission correlations on PEP performances for the single-keyhole and Rayleigh channels in Table \ref{Tab: CorrelationAffect Comparision}. We can see that transmission correlations play different roles on the PEP performances in the two types of fading channels.


\begin{table*}[!t]
\centering
\caption{Effects of transmission correlations on the PEP performance for the single-keyhole and Rayleigh channels in the asymptotically high SNR regimes}
\begin{tabular}{|l|p{4cm}|p{6cm}|}
\hline
            & Rayleigh Channels                        & Single-keyhole Channels                    \\
\hline
OSTBCs      & Always degrades  the performance in high SNR regimes \cite{Wang2004} \cite{Bolcskei2000} & Always degrades the performance in high SNR regimes \cite{Yahampath2007} \cite{Shin2008}                                              \\
\hline
Non-orthogonal STCs               & Always degrades  the performance  in high SNR regimes \cite{Wang2004} \cite{Bolcskei2000} &  If $M > N$, may either degrade or improve in high SNR regimes (Our result, Theorem \ref{Th: M_greater_N} in this correspondence); if $M \le N$, always degrades in high SNR regimes (Our result, Theorem \ref{Th: M_smaller_N} in this correspondence).
                                                                    \\

\hline
\end{tabular}\label{Tab: CorrelationAffect Comparision}
\end{table*}

\section{Conclusion}\label{Sec: Conclusion}
In this paper, we investigated the effect of transmission correlations on the PEP performance of general STCs over single-keyhole channels. We proved that, in the asymptotically high SNR regimes, when $M \le N$, the transmission correlations always degrade the PEP performance; when $M>N$, depending on how the correlation matrix $\mathbf{P}$ beamforms the code word difference matrix $\mathbf{\Delta}$, the PEP performance of general STCs can be either degraded or improved. This property of the MIMO single-keyhole channel is more sophisticated than that of the MIMO Rayleigh channel. We also provided a new measure in \eqref{Eq: EigenValueCriteriaCorrelationAffect} to exam if there exists certain correlation matrices that can improve the performance of general STCs in high SNR regimes and we provided one form of such matrices if applicable. Our new measure on the general STCs can also be used to explain previous findings for OSTBCs.


%



\ifCLASSOPTIONcaptionsoff
  \newpage
\fi



%

\bibliographystyle{IEEEtran}
\bibliography{RFIDarticle2}

\end{document}